\def\pmatrix{\left(\begin{matrix}}
\def\endpmatrix{\end{matrix}\right)}
\def\Sp{\operatorname{Sp}}
\def\Z{{\mathbb Z}}
\def\F{{\mathbb Z}_2}
\def\C{{\mathbb C}}
\def\de{\delta}
\def\p{\partial}
\def\t{\theta}
\def\T{\Theta}
\def\e{\varepsilon}
\def\A{{\mathcal A}}
\def\M{{\mathcal M}}
\def\H{{\mathcal H}}
\def\tch#1#2{{\left[\begin{matrix}#1\\ #2\end{matrix}\right]}}
\def\tt#1#2{{\t\tch{#1}{#2}}}
\theoremstyle{plain}
\newtheorem{thm}{Theorem}
\newtheorem{prop}[thm]{Proposition}
\theoremstyle{definition}
\newtheorem{rem}[thm]{Remark}
\begin{document}
\title[The 2-point function for 3-loop superstrings]{The vanishing of two-point functions for three-loop superstring scattering amplitudes}
\author{Samuel Grushevsky}
\address{Mathematics Department, Princeton University, Fine Hall,
Washington Road, Princeton, NJ 08544, USA. }
\thanks{Research is supported in part by National Science Foundation under the grant DMS-05-55867.}
\email{sam@math.princeton.edu}
\author{Riccardo Salvati Manni}
\address{Dipartimento di Matematica, Universit\`a ``La Sapienza'',
Piazzale A. Moro 2, Roma, I 00185, Italy}
\email{salvati@mat.uniroma1.it}
\date{\today}

\begin{abstract}
In this paper we show that the two-point function for the three-loop chiral superstring measure ansatz proposed by Cacciatori, Dalla Piazza, and van Geemen \cite{CDPvG} vanishes. Our proof uses the reformulation of ansatz given in \cite{G}, theta functions, and specifically the theory of the $\Gamma_{00}$ linear system on Jacobians introduced by van Geemen and van der Geer \cite{vgvdg}.

At the two-loop level, where the amplitudes were computed by D'Hoker and Phong \cite{DHP1,DHPa,DHPb,DHPc,DHPd,DHPe}, we give a new proof of the vanishing of the two-point function (which was proven by them). We also discuss the possible approaches to proving the vanishing of the two-point function for the proposed ansatz in higher genera \cite{G,SM,CDPvG2}.
\end{abstract}
\maketitle

\section{Introduction}
The problem of computing the superstring measure explicitly for arbitrary genus of the worldsheet was begun by the work of
Green and Schwarz \cite{GS}, who performed the integration over supermoduli and gave an explicit formula in genus 1. D'Hoker and Phong in a series of papers \cite{DHP1,DHPa,DHPb,DHPc} introduced a gauge-fixing procedure and computed from first principles the genus 2 superstring measure, verifying  that it satisfied the physical constraints, eg.~the vanishing of the 1,2,3-point functions. They also proposed in \cite{DHPI,DHPII} to search for an ansatz for the superstring measure in arbitrary genus as the product of the bosonic measure and a modular form.

The ansatz for three-loop measure in this form was then proposed by Cacciatori, Dalla Piazza, and van Geemen in \cite{CDPvG}. The genus $g\le 3$ ansatze were reformulated in terms of syzygetic subspaces by the first author in \cite{G}, where an ansatz for general genus was proposed, under the assumption on holomorphicity of certain $2^r$-roots. Cacciatori, Dalla Piazza, and van Geemen in \cite{CDPvG2} give the genus 4 ansatz in terms of quadrics in the theta constants. The second author in \cite{SM} showed that the proposed ansatz is holomorphic in genus 5. Dalla Piazza and van Geemen in \cite{DPvG} proved the uniqueness of the modular form in genus 3 satisfying the factorization constraints. Morozov in \cite{M1} surveyed this work and gave an alternative proof that factorization constraints are satisfied for the ansatz; in \cite{M2} he has also investigated the 1,2,3-point functions of the proposed ansatz, proving under certain non-trivial mathematical assumption that they vanish on the hyperelliptic locus.

\smallskip
In this paper we use the techniques of theta functions, and especially the $\Gamma_{00}$ sublinear system of the linear system $|2\Theta|$ introduced by van Geemen and van der Geer \cite{vgvdg} to prove the vanishing of the 2-point function in genus 3. We also obtain a new proof of the vanishing of the 2-point function in genus 2.

\section{Notations and definitions}
We denote by $\A_g$ the moduli space of complex principally polarized abelian varieties of dimension $g$, and by $\H_g$ the Siegel upper half-space of symmetric complex matrices with positive-definite imaginary part, called period matrices. The space $\H_g$ is the universal cover of $\A_g$, with the deck group $\Sp(2g,\Z)$, so that we have $\A_g=\H_g/\Sp(2g,\Z)$ for a certain action of the symplectic group. A function $f:\H_g\to\C$ is called a (scalar) modular form of weight $k$ with respect to a subgroup $\Gamma\subset\Sp(2g,\Z)$ if
$$
 f(\gamma\circ\tau)=\det(C\tau+D)^kf(\tau)\qquad\forall\gamma\in\Gamma, \forall\tau\in\H_g,
$$
where $C$ and $D$ are the lower blocks if we write $\gamma$ as four $g\times g$ blocks.

For a period matrix $\tau\in\H_g$ the principal polarization $\T_\tau$ on the abelian variety $A_\tau:=\C^g/(\Z^g+\tau\Z^g)$ is the divisor of the theta function
$$
  \theta(\tau,z):=\sum\limits_{n\in\Z^g}\exp(\pi i (n^t\tau n+2n^t
  z)).
$$
Notice that for fixed $\tau$ theta is a function of $z\in\C^g$, and its automorphy properties under the lattice $\Z^g+\tau\Z^g$ define the bundle $\T_\tau$.

Given a point of order two on $A_\tau$, which can be uniquely
represented as $\frac{\tau\e+\de}{2}$ for $\e,\de\in \F^g$ (where $\F=\lbrace 0,1\rbrace$ is the additive group), the associated theta function with characteristic is
$$
  \tt\e\de(\tau,z):=\sum\limits_{n\in\Z^g}\exp(\pi i ((n+\e)^t\tau
  (n+\e)+ 2(n+\e)^t( z+\de)).
$$
As a function of $z$, $\tt\e\de$ is odd or even depending on whether
the scalar product $\e\cdot\de\in\F$ is equal to 1 or 0, respectively. The theta function with characteristic is the generator of the space of sections of the bundle $\T_\tau+\frac{\tau\e+\de}{2}$ (where we have implicitly identified the principally polarized abelian variety with its dual, and think of points as bundles of degree 0). Thus the square of any theta function with characteristic is a section of $2\T_\tau$, and the basis for the space of sections of this bundle is given by theta functions of the second order
$$
 \T[\e](\tau,z):=\tt\e0(2\tau,2z)
$$
for all $\e\in\F^g$. Riemann's addition formula is an explicit expression of the squares of theta functions with characteristics in this basis:
\begin{equation}\label{riem}
 \tt\e\de(\tau,z)^2=\sum\limits_{\sigma\in\F^g}(-1)^{\de\cdot\sigma} \T[\sigma](\tau,0)\T[\sigma+\e](\tau,z).
\end{equation}

Theta constants are restrictions of theta functions to $z=0$; thus all theta constants with odd characteristics vanish identically in $\tau$, while theta constants with even characteristics and all theta constants of the second order do not vanish identically. All theta constants with characteristics are modular forms of weight one half with respect to a certain normal subgroup of finite index $\Gamma(4,8)\subset\Sp(2g,\Z)$, while all theta constants of the second order are modular forms of weight one half with respect to a bigger normal subgroup $\Gamma(2,4)\supset\Gamma(4,8)$.

\smallskip
Theta constants with characteristics are not algebraically
independent, and satisfy a host of  algebraic identities, some of which follow from Riemann's addition formula. However, the theta constants of the second order are algebraically independent for $g=1,2$, and the only relation among them in genus 3 is of degree 16, and has been known classically. It is discussed in detail in \cite{vgvdg} --- here we give the explicit formula for easy reference. Indeed, a special case of Riemann's quartic addition theorem in genus 3 is the following identity for theta constants (where we suppress the argument $\tau$)
$$
 \tt{0\ 0\ 0}{0\ 0\ 0}\tt{0\ 0\ 0}{1\ 0\ 0}
     \tt{0\ 0\ 0}{0\ 1\ 0}\tt{0\ 0\ 0}{1\ 1\ 0}=
$$
$$
\tt{0\ 0\ 1}{0\ 0\ 0}\tt{0\ 0\ 1}{1\ 0\ 0}
     \tt{0\ 0\ 1}{0\ 1\ 0}\tt{0\ 0\ 1}{1\ 1\ 0}+
\tt{0\ 0\ 0}{0\ 0\ 1}\tt{0\ 0\ 0}{1\ 0\ 1}
     \tt{0\ 0\ 0}{0\ 1\ 1}\tt{0\ 0\ 0}{1\ 1\ 1}.
$$
If we denote the three terms in this relations by $r_i$, so that the relation is $r_1=r_2+r_3$, then multiplying the 4 ``conjugate'' relations $r_1=\pm r_2\pm r_3$ yields the identity
\begin{equation}\label{F}
  F:=r_1^4+r_2^4+r_3^4-2r_1^2r_2^2-2r_2^2r_3^2-2r_3^2r_1^2=0.
\end{equation}
Notice that $F$ is a polynomial of degree 8 in the squares of theta constants with characteristics, and thus by applying Riemann's addition formula (\ref{riem}) $F$ can be rewritten as a polynomial of degree 16 in theta constants of the second order. We refer to \cite{BL,I} for details on theta functions and modular forms, and the current knowledge about the ideal of relations among theta constants of the second order for $g>3$ (which is not known completely even for $g=4$).

\section{The linear system $\Gamma_{00}$}
In this section we review the definition and some facts about the linear system $\Gamma_{00}\subset|2\T|$ introduced and studied in \cite{vgvdg}. We refer to that paper for details, as well as to \cite{vg,G2,Iz} for a review and results on the importance of the linear system $\Gamma_{00}$ for the Schottky problem of characterizing Jacobians.

The linear system $\Gamma_{00}\subset |2\T|$ is defined to consist of all sections vanishing to order at least four at the origin. Since all sections of $2\T$ are even, this is equivalent to the value and the second derivatives $\p_{z_i}\p_{z_j}$ vanishing at zero. These conditions turn out be independent, when $(A_{\tau}, \Theta)$ is an indecomposable ppav (i.e.~not isomorphic to a product of lower-dimensional ppavs). In this case the matrix
$$
 \left(\begin{matrix} \T[\e_1](\tau,0)&\frac{\p \T[\e_1](\tau,0)}{\p\tau_{11}}&\ldots&\frac{\p \T[\e_1](\tau,0)}{\p\tau_{gg}}\\
  \vdots&\vdots&\vdots&\vdots\\
 \T[\e_{2^g}](\tau,0)&\frac{\p \T[\e_{2^g}](\tau,0)}{\p\tau_{11}}&\ldots&\frac{\p \T[\e_{2^g}](\tau,0)}{\p\tau_{gg}}\end{matrix}\right)
$$
has rank $\frac{g(g+1)}{2}+1$, cf\cite{Sas}
and thus (\cite{vgvdg}, proposition 1.1)
\begin{equation}\label{dimgamma00}
 \dim\Gamma_{00}=\dim |2\T|-1-\sum\limits_{1\le i\le j\le g}1=2^g-1-\frac{g(g+1)}{2}.
\end{equation}
Thus the linear system $\Gamma_{00}$ is zero for $g\ge 2$, has dimension 1 for $g=3$, and higher dimension for all other genera.\bigskip

The above description leads to a simple construction of a basis for  the space $\Gamma_{00}$.

\begin{prop}
Let $\tau_0$ be an irreducible point of $\H_g$ (i.e. corresponding to indecomponsable ppav). Denote $N:=1+\frac{g(g+1)}{2}$, and choose $\e_1,\ldots,\e_N\in\F^g$ such that the the modular form
$$
 g_{\e_1,\ldots,\e_N}(\tau):=\det\left(\begin{matrix}\T[\e_1](\tau,0)&\frac{\p \T[\e_1](\tau,0)}{\p\tau_{11}}&\ldots&\frac{\p \T[\e_1](\tau,0)}{\p\tau_{gg}}\\
   \vdots&\vdots&\vdots\\
\T[\e_N](\tau,0)&\frac{\p \T[\e_N](\tau,0)}{\p\tau_{11}}&\ldots&\frac{\p \T[\e_N](\tau,0)}{\p\tau_{gg}}\end{matrix}\right)
$$
does not vanish at $\tau_0$. Then the sections
$$ f_\e(\tau_0,\, z):=
\det\left(\begin{matrix}\T[\e_1](\tau_0,z)&\T[\e_1](\tau_0,0)&\frac{\p \T[\e_1](\tau_0,0)}{\p\tau_{11}}&\ldots&\frac{\p \T[\e_1](\tau_0,0)}{\p\tau_{gg}}\\
  \vdots&\vdots&\vdots&\vdots\\
  \T[\e_N](\tau_0,z)&\T[\e_N](\tau_0,0)&\frac{\p \T[\e_N](\tau_0,0)}{\p\tau_0\tau_{11}}&\ldots&\frac{\p \T[\e_N](\tau_0,0)}{\p\tau_{gg}}\\
 \T[\e](\tau_0,z)&\T[\e](\tau_0,0)&\frac{\p \T[\e](\tau_0,0)}{\p\tau_{11}}&\ldots&\frac{\p \T[\e](\tau_0,0)}{\p\tau_{gg}}
 \end{matrix}\right),
$$
for $\e\in\F^g\setminus\lbrace\e_1,\ldots,\e_N\rbrace$ form a basis of $\Gamma_{00}\subset|2\T_{\tau_0}|$.
\end{prop}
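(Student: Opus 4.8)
The plan is to show two things: first, that each $f_\e(\tau_0,z)$, as a function of $z$, is a section of $2\T_{\tau_0}$ lying in $\Gamma_{00}$; and second, that the collection $\{f_\e\}_{\e\notin\{\e_1,\ldots,\e_N\}}$ is linearly independent, hence — since it has the right cardinality $2^g-N=\dim\Gamma_{00}+1$... wait, let me recount: $\dim\Gamma_{00}=2^g-1-\tfrac{g(g+1)}2=2^g-N$, so as a \emph{vector space} $\Gamma_{00}$ has dimension $2^g-N$ and we have exactly that many sections $f_\e$. So linear independence of the $f_\e$ together with membership in $\Gamma_{00}$ will give a basis.

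For membership, I would expand the determinant defining $f_\e(\tau_0,z)$ along its first column. This writes $f_\e(\tau_0,z)=\sum_{j}c_j\,\T[\e_j](\tau_0,z)+c\,\T[\e](\tau_0,z)$, where the coefficients $c_j,c$ are the appropriate $N\times N$ minors of the matrix of theta constants and their $\p_{\tau_{ii}}$-derivatives, hence are constants (independent of $z$, depending only on $\tau_0$). In particular $f_\e(\tau_0,z)$ is a $\C$-linear combination of the $\T[\sigma](\tau_0,z)$, so it is a section of $2\T_{\tau_0}$, i.e. it lies in $|2\T_{\tau_0}|$. To see it lies in $\Gamma_{00}$, recall from the discussion preceding the proposition that $\Gamma_{00}$ consists of those sections whose value and whose second $z$-derivatives $\p_{z_i}\p_{z_j}$ vanish at $z=0$, and that by the heat equation $\p_{z_i}\p_{z_j}\T[\sigma](\tau,z)=4\pi i(1+\de_{ij})^{-1}\cdot 2 \p_{\tau_{ij}}\T[\sigma](\tau,z)$ — more precisely the heat equation $\p_{\tau_{ij}}\T[\sigma]=\frac{1}{4\pi i(1+\de_{ij})}\p_{z_i}\p_{z_j}\T[\sigma]$ lets us replace $\tau$-derivatives by $z$-derivatives. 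Hence at $z=0$ the vector $(f_\e,\p_{z_1}^2 f_\e,\p_{z_1}\p_{z_2}f_\e,\ldots,\p_{z_g}^2 f_\e)$ is, up to the nonzero scalars coming from the heat equation, exactly the vector of the last row of the matrix after substituting $z=0$; but then that last row duplicates one of the rows above it (the one indexed by $\T[\e](\tau_0,0)$ and its $\tau$-derivatives appears twice), so the determinant and all these derivative-determinants vanish. Therefore $f_\e(\tau_0,\cdot)\in\Gamma_{00}$.

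For linear independence, suppose $\sum_{\e}a_\e f_\e(\tau_0,z)\equiv 0$ in $z$. Using the first-column expansion above, the coefficient of $\T[\e](\tau_0,z)$ in $f_{\e'}$ is nonzero precisely when $\e=\e'$ (it equals $\pm g_{\e_1,\ldots,\e_N}(\tau_0)\ne 0$), and when $\e=\e_j$ for some $j\le N$ the coefficient is some minor. Since the $\T[\sigma](\tau_0,z)$, $\sigma\in\F^g$, are a \emph{basis} of the sections of $2\T_{\tau_0}$ (stated in the excerpt), they are linearly independent as functions of $z$, so the relation $\sum a_\e f_\e\equiv0$ forces, looking at the coefficient of each $\T[\e](\tau_0,z)$ with $\e\notin\{\e_1,\ldots,\e_N\}$, that $a_\e\cdot(\pm g_{\e_1,\ldots,\e_N}(\tau_0))=0$, hence $a_\e=0$ for all such $\e$. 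Thus the $f_\e$ are linearly independent, and being $2^g-N=\dim\Gamma_{00}$ in number and all lying in $\Gamma_{00}$, they form a basis.

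The routine-but-slightly-delicate point, and the one I would be most careful about, is the bookkeeping with the heat equation: one must check that the linear map sending a section $s=\sum_\sigma b_\sigma\T[\sigma](\tau_0,z)$ to its "jet" $(s(0),(\p_{z_i}\p_{z_j}s)(0))$ agrees, up to invertible diagonal rescaling, with the map sending $s$ to the row $(\,\T\text{-value},\ \p_{\tau_{11}}\text{-value},\ldots)$ evaluated at the coefficients $b_\sigma$ — i.e. that "$\p_{\tau_{ij}}$ at $z=0$" and "$\p_{z_i}\p_{z_j}$ at $z=0$" carry the same information on sections of $2\T$. This is exactly the content underlying the rank statement (\ref{dimgamma00}) and is where indecomposability of $(A_{\tau_0},\T)$ enters, guaranteeing $g_{\e_1,\ldots,\e_N}\not\equiv 0$ so that such $\e_1,\ldots,\e_N$ exist at all.
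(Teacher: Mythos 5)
Your proof is correct and follows essentially the same route as the paper: membership in $\Gamma_{00}$ via the heat equation turning $\p_{z_i}\p_{z_j}$ at $z=0$ into $\p_{\tau_{ij}}$, linear independence because the basis element $\T[\e](\tau_0,z)$ occurs only in $f_\e$ and there with coefficient $\pm g_{\e_1,\ldots,\e_N}(\tau_0)\neq 0$, together with the count of $2^g-N$ sections against $\dim\Gamma_{00}$. One slip in wording: what repeats in the determinant is not a row but a column --- at $z=0$ the first column coincides with the second, and after applying $\p_{z_i}\p_{z_j}$ and the heat equation it becomes proportional to the $\p_{\tau_{ij}}$-column --- which is exactly why the value and all second-derivative determinants vanish; the rows are indexed by distinct characteristics $\e_1,\ldots,\e_N,\e$ and never coincide, so "the last row duplicates one of the rows above it" is not the right justification, though your jet-map formulation in the last paragraph supplies the correct one.
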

\begin{proof}
The proof is a simple linear algebra argument that we recall  for completeness. First note that  $ f_\e(\tau_0,\,z)$ belongs to $\Gamma_{00}$, as the determinant and all the second $z$-derivatives (equal to the first $\tau$-derivatives by the heat equation) vanish for $z=0$, as two of the columns of the matrix become identical. It thus remains to show that the functions $f_\e$ for various $\e$ are linearly independent. Indeed, recall that theta functions of the second order form a basis of sections of $2\T$, and now note that the basis element $\T[\e](\tau_0, z)$  enters only the expression  of  $f_\e(\tau_0,  z)$, and that with  non-zero coefficient  $g_{\e_1,\ldots,\e_N}(\tau_0)$.
\end{proof}

\begin{rem}
It can be shown that on the open sets $\lbrace g_{\e_1,\ldots,\e_N}(\tau)\neq 0\rbrace$ the coefficients of the basis vectors are in fact modular forms of weight
$g+1+N/2$, see \cite{GSM}. In particular when $g=3$ we have a global expression of the unique section $f(\tau, z)$ of the space $\Gamma_{00}$.
\end{rem}

\begin{rem}
Observe that if the  period matrix $\tau$ is decomposable, then the dimension of $\Gamma_{00} $ increases; however, a basis can still be constructed by using the same method.
\end{rem}

There exists another  method  for constructing elements of $\Gamma_{00}$ --- it is described in \cite{vgvdg}, and is as follows. Suppose $I$ is an algebraic relation among theta constants of the second order (in genus $g$). This is to say, suppose $I\in\C[x_{0\ldots0},\ldots, x_{1\ldots 1}]$ is a polynomial in $2^g$ variables such that for any $\tau\in\H_g$ we have $I(\T[\e](\tau))=0$. Then the function
$$
 f_I(z):=\sum\limits_{\e\in\F^g}\frac{\p I}{\p x_\e}\left(\T[0\ldots0](\tau,0),\ldots, \T[1\ldots1](\tau,0)\right)
 \, \T[\e](\tau,z)
$$
lies in $\Gamma_{00}\subset|2\T_\tau|$. Indeed, since $I$ vanishes identically on $\H_g$, by Euler's formula we have $f_I(0)=0$. Moreover, by the heat equation
$$
 2\pi i(1+\de_{j,k})\frac{\p^2 f_I}{\p z_j\p z_k}|_{z=0} =\sum\limits_{\e\in\F^g}\frac{\p I}{\p x_\e}
 \, \frac{\p \T[\e](\tau)}{\p\tau_{jk}}=\frac{\p I(\T[0\ldots0],\ldots\T[1\ldots1])}{\p\tau_{jk}},
$$
which is zero since $I$ vanishes identically on $\H_g$, and thus its derivative in any direction is also zero. In \cite{vgvdg}, proposition 1.2 it is shown that as $I$ ranges over the ideal of relations among theta constants, the functions $f_I$ generate the linear system $\Gamma_{00}$. Since for $g\ge 4$ the ideal of algebraic relations among theta constants of the second order is not completely known, for $g\ge 4$ this method does not yield a complete description of the basis of $\Gamma_{00}$. However, the geometry of these relations is intriguing, and this methods produces elements of $\Gamma_{00}$ with coefficients algebraic in theta constants, rather than involving their derivatives as well.

\section{The proposed ansatz for the superstring measure}
An ansatz for the 3-loop superstring measure was proposed in \cite{CDPvG}. The reformulation of this ansatz in terms of products of theta constants with characteristics in a syzygetic subspace given in \cite{G} is as follows. For any $i=0\ldots g$ define
\begin{equation}\label{defG}
  G_i^{(g)}\tch\e\de(\tau):=\sum\limits_{V\subset\F^{2g};\, \dim V=i} \ \ \prod\limits_{\tch\alpha\beta\in V}\tt{\e+\alpha}{\de+\beta}(\tau)^{2^{4-i}}.
\end{equation}
Notice that since any $i$-dimensional linear subspace contains zero, all products will contain $\tt\e\de$. Since all odd theta constants vanish identically, it is enough to sum over the even cosets of syzygetic $i$-dimensional subspaces containing $[\e,\de]$, see \cite{G,SM}.

To simplify notations, we write $m:=[\e,\de]\in\F^{2g}$ for characteristics and similarly write $\t_m:=\tt\e\de$. Then the proposed ansatz for the superstring measure is the product of the bosonic measure (which is a form on $\M_g$) and, for any even characteristic $m$, the expression
\begin{equation}\label{defXi}
 \Xi_m^{(g)}:=\sum\limits_{i=0}^g(-1)^i2^{\frac{i(i-1)}{2}} G_i^{(g)}[m]
\end{equation}
which is a modular form of weight 8 with respect to a subgroup of $\Sp(2g,\Z)$ conjugate to $\Gamma(1,2)$. In particular for genus 3 we have
$$
  \Xi_m^{(3)}:=G_0^{(3)}[m]-G_1^{(3)}[m]+2G_2^{(3)}[m]-8G_3^{(3)}[m].
$$
In \cite{SM} it is shown that the sum $\sum_m \Xi_m^{(3)}$ is a non-zero multiple of the modular form $F$ given by (\ref{F}), and thus vanishes identically on $\H_3$.

From definition (\ref{defG}) of the summands $G_i^{(g)}[m]$ of the measure $\Xi_m^{(g)}$ it follows that $G_i^{(g)}[m]$ is a polynomial in the squares of theta constants with characteristics for $i\le 3$, divisible by $\t_m^2(\tau)$. Since this is the only kind of summands appearing in the definition of $\Xi_m^{(g)}$ for $g\le 3$, by applying Riemann's addition formula (\ref{riem}) we get
\begin{prop}\label{expressible}
For $g\le 3$ the modular form $\Xi_m^{(g)}$ defined by (\ref{defXi}), and moreover the ratio $\Xi_m^{(g)}/\t_m^2(\tau,0)$ are both polynomials in theta constants of the second order, of degrees 16 and 14, respectively.
\end{prop}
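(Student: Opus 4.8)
The plan is to combine the two structural facts recorded just before the statement --- that for $i\le 3$ each $G_i^{(g)}[m]$ is a polynomial in the squares of theta constants with characteristics, and is divisible by $\t_m^2$ --- with a homogeneity/degree count and a single application of Riemann's addition formula (\ref{riem}) at $z=0$.

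First I would note that each $G_i^{(g)}[m]$ is homogeneous of degree $2^i\cdot 2^{4-i}=16$ in the theta constants with characteristics: an $i$-dimensional subspace $V\subset\F^{2g}$ has $2^i$ elements, and each of the $2^i$ factors in the corresponding product in (\ref{defG}) is a theta constant raised to the power $2^{4-i}$. Hence $\Xi_m^{(g)}$, being a linear combination of the $G_i^{(g)}[m]$, is homogeneous of degree $16$ in the theta constants with characteristics. Now for $g\le 3$ only the summands $0\le i\le 3$ enter (\ref{defXi}), and for these the exponent $2^{4-i}\in\{16,8,4,2\}$ is even; thus every factor in (\ref{defG}) is an even power of a theta constant with characteristic, so $\Xi_m^{(g)}$ is a polynomial, homogeneous of degree $8$, in the squares of theta constants with characteristics. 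Moreover $0\in V$ contributes the factor $\t_m^{2^{4-i}}$ to every product in (\ref{defG}), and since $2^{4-i}-2\in\{14,6,2,0\}$ is still even for $i\le 3$, dividing each term by $\t_m^2$ exhibits $\Xi_m^{(g)}/\t_m^2$ as a polynomial, homogeneous of degree $7$, in the squares of theta constants with characteristics.

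It then remains to pass from squares of theta constants with characteristics to theta constants of the second order. Evaluating Riemann's addition formula (\ref{riem}) at $z=0$ writes each square $\tt\e\de(\tau,0)^2$ as the homogeneous quadratic $\sum_{\s\in\F^g}(-1)^{\de\cdot\s}\T[\s](\tau,0)\T[\s+\e](\tau,0)$ in the $2^g$ theta constants of the second order. Substituting this into the two polynomials produced above turns a monomial of degree $8$, respectively $7$, in the squares into a homogeneous polynomial of degree $16$, respectively $14$, in the $\T[\s](\tau,0)$, and the homogeneity makes these degrees exact once one knows that $\Xi_m^{(g)}$ (hence also $\Xi_m^{(g)}/\t_m^2$) does not vanish identically. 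As a consistency check, the weights match: $16/2=8$ and $14/2=7$. This proves both assertions.

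The argument is essentially bookkeeping, so I do not anticipate a genuine obstacle; the one point that needs care is the parity of the exponents $2^{4-i}$ and $2^{4-i}-2$, and this is precisely what confines the statement to $g\le 3$. For $g\ge 4$ the summand $i=4$ appears, $2^{4-4}=1$ is odd, and $G_4^{(g)}[m]$ is no longer a polynomial in the squares of theta constants with characteristics, so this route breaks down and one would need a genuinely different argument (or the partial information about the ideal of relations among second-order theta constants discussed above).
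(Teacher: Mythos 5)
Your argument is correct and is essentially the paper's own: the proposition is deduced exactly as in the sentence preceding it, from the observation that for $i\le 3$ every exponent $2^{4-i}$ (and $2^{4-i}-2$ after dividing by $\t_m^2$) is even, so each $G_i^{(g)}[m]$ is a polynomial in squares of theta constants divisible by $\t_m^2$, followed by one application of Riemann's addition formula (\ref{riem}) at $z=0$. Your explicit degree bookkeeping ($2^i\cdot 2^{4-i}=16$, hence degrees $8$ and $7$ in the squares, becoming $16$ and $14$ after the quadratic substitution) just spells out what the paper leaves implicit.
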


\section{The vanishing of the 2-point function}
We recall (see \cite{DHPa} for explicit formulas) that the vanishing of the cosmological constant reduces to the identity $\sum_m \Xi_m^{(g)}=0$ (proven for the proposed ansatz for $g\le 4$ in \cite{SM}), and this also implies the vanishing of the 1-point function, while as shown in \cite{DHPe}the vanishing of the two-point function is equivalent to the vanishing of 
$$
 \sum_m \Xi_m^{(g)} S_m(a,b)^2
$$
for any points $a,b$ on the Riemann surface (thought of as embedded into its Jacobian), where $S_m$ is the Sz\"ego kernel
$$
 S_m(a,b):=\frac{\t_m(a-b)}{\t_m(0)E(a,b)},
$$
with $E$ being the prime form on the Riemann surface.

Since the prime form does not depend on $m$, it is a common factor in all summands above, and thus does not matter for the vanishing of the 2-point function, so the vanishing of the 2-point is equivalent to the vanishing of
$$
 X_2(a,b):=\sum_m \frac{\Xi_m^{(g)}(\tau)}{\t_m^2(\tau,0)} \t_m(\tau,a-b)^2
$$
where $\tau$ is the period matrix of the Jacobian $Jac(C)$ of a Riemann surface $C$, and $a,b\in C\subset Jac(C)$ are arbitrary.
We will now relate the vanishing of the 2-point function and the $\Gamma_{00}$ linear system. Set
\begin{equation}\label{X_2}
 X_2(\tau,z):=\sum_m\frac{\Xi_m(\tau)}{\t_m^2(\tau,0)}\t_m(\tau,z)^2
\end{equation}
and note that this function is a section of $|2\T_\tau|$. The vanishing of the 2-point function is then equivalent to $X_2(z)$ vanishing along the surface $C-C\subset Jac(C)$. By proposition 2.1 in \cite{vgvdg} and the subsequent remark, a section of $|2\T|$ vanishes along the surface $C-C$ if and only if it lies in $\Gamma_{00}$. We thus get
\begin{thm}\label{thm}
The 2-point function for the proposed superstring measure ansatz vanishes (for genus $g$) if and only if for the period matrix $\tau$ of any Jacobian of a Riemann surface of genus $g$ the section $X_2(\tau,z)$ of $2\T_\tau$ defined above lies in $\Gamma_{00}$.
\end{thm}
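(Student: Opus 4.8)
The plan is to obtain the equivalence by chaining together three facts, two of which are already assembled in the discussion preceding the statement. First I would restate the vanishing of the two-point function in the form derived there: by the reduction of D'Hoker and Phong \cite{DHPe} it amounts to the vanishing of $\sum_m \Xi_m^{(g)}(\tau)\, S_m(a,b)^2$ for all $a,b\in C$, and, after cancelling the prime form $E(a,b)$ (which is the same in every summand), to the identity $X_2(a,b)=\sum_m\bigl(\Xi_m^{(g)}(\tau)/\t_m^2(\tau,0)\bigr)\t_m(\tau,a-b)^2=0$ for all $a,b\in C$, where $\tau$ is the period matrix of $Jac(C)$.

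Second, I would check that the function $X_2(\tau,z)$ of (\ref{X_2}) is genuinely a section of $2\T_\tau$: each summand $\t_m(\tau,z)^2$ is a section of $2\T_\tau$ by Riemann's addition formula (\ref{riem}), and the coefficients $\Xi_m^{(g)}(\tau)/\t_m^2(\tau,0)$ are functions of $\tau$ alone (Proposition~\ref{expressible} even guarantees this ratio is holomorphic in $\tau$ for $g\le 3$). The section $X_2(\tau,z)$ is moreover even in $z$, so the requirement that $X_2(a,b)=0$ for all $a,b\in C$ is exactly the requirement that $X_2(\tau,\cdot)$ vanish identically on the difference surface $C-C\subset Jac(C)$, the image of $C\times C$ under $(a,b)\mapsto a-b$ for a chosen Abel--Jacobi embedding.

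Third, I would invoke Proposition~2.1 of \cite{vgvdg} together with the remark following it: on the Jacobian of a Riemann surface a section of $|2\T|$ vanishes along $C-C$ if and only if it belongs to the subsystem $\Gamma_{00}$, i.e.\ if and only if it vanishes to order at least four at the origin. Stringing the three steps together---and letting $C$ range over all curves of the given genus, so that the quantifier ``for all $a,b\in C$'' turns into ``for the period matrix $\tau$ of any Jacobian''---yields precisely the asserted equivalence.

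I do not expect any step to pose a serious obstacle, since the statement is essentially a dictionary entry translating the physical vanishing condition into the language of the $\Gamma_{00}$ system. The point I would be most careful about is the nontrivial half of \cite[Prop.~2.1]{vgvdg}, namely that vanishing on $C-C$ forces membership in $\Gamma_{00}$: this is the genuinely geometric input, resting on the Riemann singularity theorem and on the fact that $C-C$ spans $Jac(C)$, and I would want to confirm it is being applied to Jacobians within the stated range of validity. The remaining manipulations---the cancellation of the prime form and the identification of $X_2(\tau,z)$ as a $z$-even section of $2\T_\tau$---are purely formal.
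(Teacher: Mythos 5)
Your proposal is correct and follows essentially the same route as the paper: the D'Hoker--Phong reduction, cancellation of the prime form, the observation that $X_2(\tau,z)$ is a section of $2\T_\tau$ whose vanishing on $C-C$ encodes the two-point function, and the appeal to Proposition~2.1 of \cite{vgvdg} (with the subsequent remark) identifying sections of $|2\T|$ vanishing on $C-C$ with $\Gamma_{00}$. Nothing essential differs from the paper's own argument.
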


Since for $g=1,2$ the linear system $\Gamma_{00}$ is zero, the vanishing of the two-point function is equivalent to $X_2(\tau,z)$ vanishing identically in $z$ and $\tau$ for $g\le 2$. If we write out $X_2(\tau,z)$ as a linear combination of the basis for sections of $2\T_\tau$ given by theta functions of the second order
\begin{equation}\label{X2}
 X_2(\tau,z)=\sum c_\e(\tau)\T[\e](\tau,z),
\end{equation}
then $X_2$ vanishes identically if and only if each $c_\e(\tau)$ vanishes identically. This allowes us to recover the result of D'Hoker and Phong in genus 2.
\begin{prop}
The 2-point function for the proposed superstring ansatz vanishes identically for $g\le 2$.
\end{prop}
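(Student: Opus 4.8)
Granted the reductions already recorded above, it remains to prove that every coefficient $c_\e(\tau)$ in the expansion~(\ref{X2}) vanishes identically when $g\le 2$. The plan is to make the $c_\e$ explicit and then verify the vanishing as an algebraic identity. First I would substitute Riemann's addition formula~(\ref{riem}) into the definition~(\ref{X_2}) of $X_2$ and read off the coefficient of $\T[\mu](\tau,z)$ in the basis of theta functions of the second order: writing $m=\tch\a\b$ for the even characteristics, this gives
\[
 c_\mu(\tau)=\sum_{m}(-1)^{\b\cdot(\a+\mu)}\,\frac{\Xi_m^{(g)}(\tau)}{\t_m^2(\tau,0)}\,\T[\a+\mu](\tau,0).
\]
By Proposition~\ref{expressible} each ratio $\Xi_m^{(g)}(\tau)/\t_m^2(\tau,0)$ is a polynomial of degree $14$ in the theta constants of the second order, so every $c_\mu(\tau)$ is a polynomial of degree $15$ in these $2^g$ quantities; since for $g\le 2$ the theta constants of the second order are algebraically independent, it suffices to check that this polynomial is the zero polynomial. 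Thus for $g\le 2$ the whole statement becomes a finite computation, once the $\Xi_m^{(g)}$ are written out from~(\ref{defG})--(\ref{defXi}).

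For $g=1$ this computation is very short. There are only the three even characteristics $\tch00,\tch01,\tch10$; evaluating $\Xi_m^{(1)}=G_0^{(1)}[m]-G_1^{(1)}[m]$ from~(\ref{defG})--(\ref{defXi}) and using Jacobi's quartic identity $\tt00^4=\tt01^4+\tt10^4$ once, I would find
\[
 \Xi_m^{(1)}(\tau)=\pm 2\,\big(\tt00\,\tt01\,\tt10\big)^4\,\t_m^4,
\]
the sign being $+1$ for $m=\tch00$ and $-1$ for $m=\tch01$ and $m=\tch10$; the relation $\sum_m\Xi_m^{(1)}=0$ is then just Jacobi's identity again. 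Consequently
\[
 X_2^{(1)}(\tau,z)=2\,\big(\tt00\,\tt01\,\tt10\big)^4\Big(\tt00(\tau,0)^2\,\tt00(\tau,z)^2-\tt01(\tau,0)^2\,\tt01(\tau,z)^2-\tt10(\tau,0)^2\,\tt10(\tau,z)^2\Big),
\]
and the bracketed expression vanishes identically: expanding each of the three squares $\tt\e\de(\tau,z)^2$ by~(\ref{riem}) in the two theta functions of the second order $\T[0],\T[1]$, the difference of the first two terms equals the third. This settles $g=1$.

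For $g=2$ the strategy is identical, but the bookkeeping is the real work, and I expect it to be the only genuine obstacle. Here $\Xi_m^{(2)}=G_0^{(2)}[m]-G_1^{(2)}[m]+2G_2^{(2)}[m]$, there are ten even characteristics, and the subspaces that actually contribute to $G_1^{(2)}$ and $G_2^{(2)}$ in~(\ref{defG}) are the syzygetic ones with even coset --- the syzygetic pairs and the G\"opel systems --- so the reduction of each $c_\mu$ to a polynomial in the four theta constants of the second order, though finite, is no longer short. I would organize it either by first putting $\Xi_m^{(2)}$ into the closed D'Hoker--Phong form (in which $\Xi_m^{(2)}(\tau)/\t_m^2(\tau,0)$ is $\t_m^2(\tau,0)$ times a weight-six modular form) and then reducing $X_2^{(2)}$ via the classical genus-two quartic theta relations, or by viewing each $c_\mu$ as a modular form of weight $\tfrac{15}{2}$ for $\Gamma(2,4)$ and using the known structure of that ring. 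Either way the assertion becomes a finite verification, which is precisely what makes this a new --- and essentially computational --- proof of the D'Hoker--Phong vanishing in genus $2$.
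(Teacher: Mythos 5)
Your proposal follows essentially the same route as the paper: expand $X_2$ in theta functions of the second order via Riemann's addition formula, note that the coefficients $c_\e$ are explicit degree-$15$ polynomials in the second order theta constants, which are algebraically independent for $g\le 2$, and reduce the statement to a finite verification that these polynomials vanish (your genus $1$ computation by hand is correct, and the genus $2$ case is exactly the computation the paper delegates to Maple). The only extra ingredient in the paper is the observation that, since $X_2$ is a Jacobi form and a suitable subgroup of $\Sp(2g,\Z)$ permutes the $c_\e$ while acting monomially on second order theta constants, it suffices to check a single coefficient --- a trick you could borrow to tame the genus $2$ bookkeeping you flag as the main obstacle.
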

\begin{proof}
Let us apply Riemann's addition formula (\ref{riem}) to the definition (\ref{X_2}) of the two-point function to rewrite it in terms of theta functions of the second order (notice that the only term depending on $z$ is $\t_m(\tau,z)^2$, and we apply the addition formula to it as well). Notice that by proposition \ref{expressible} the coefficients $c_\e$ in (\ref{X2}) obtained in this way are explicit polynomials of degree 15 in theta constants of the second order, and since there are no algebraic relations among theta constants of the second order for $g\le 2$, one needs to verify that they all polynomials $c_\e$ are zero. This can be done on a computer (we used Maple). Note that the computation can be made easier by noting that since $X_2(\tau,z)$  has a transformation formula  with respect to the entire symplectic group, (i.e.~it is a Jacobi form)  and the coefficients $c_\e(\tau)$ are permuted under the action  a suitable subgroup $\Sp(2g,\Z)$ that acts monomially on the theta constants of the second order, it is enough to check that just one of $c_\e$ is the zero polynomial.
\end{proof}

In the case of $g=3$, recall from (\ref{dimgamma00}) that the space $\Gamma_{00}$ is one-dimensional, and by the results of \cite{vgvdg} we know that it is generated by
$$
 F_2:=\sum\limits_{\e\in\F^3}\frac{\p F}{\p x_\e}\left(\T[000](\tau,0),\ldots,\T[111](\tau,0)\right) \T[\e](\tau,z),
$$
where we recall that $F$, given by  (\ref{F}), is the only polynomial relation of degree 16 among the 8 theta constants of the second order for $g=3$.

\begin{prop}
For any $\tau\in\H_3$ the sections $F_2$ and $X_2$ of $2\T_\tau$ are proportional; more precisely $F_2=-\frac{14}{5}X_2$.
\end{prop}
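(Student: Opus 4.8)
The plan is to convert the asserted identity $F_2=-\frac{14}{5}X_2$, an equality of sections of $2\T_\tau$ for every $\tau\in\H_3$, into an explicit identity of polynomials in the eight theta constants of the second order $x_\e:=\T[\e](\tau,0)$, and then to verify that polynomial identity. First I would expand $X_2$ in the basis $\{\T[\e](\tau,z)\}$ of sections of $2\T_\tau$: applying Riemann's addition formula (\ref{riem}) to each factor $\t_m(\tau,z)^2$ in (\ref{X_2}), and (by Proposition \ref{expressible}) rewriting each coefficient $\Xi_m^{(3)}/\t_m^2(\tau,0)$ in the $x_\e$ as well, one obtains $X_2(\tau,z)=\sum_\e c_\e(\tau)\,\T[\e](\tau,z)$ with every $c_\e$ a homogeneous polynomial of degree $15$ in $x_0,\dots,x_7$. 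On the other side $F_2=\sum_\e\frac{\p F}{\p x_\e}(x_0,\dots,x_7)\,\T[\e](\tau,z)$ by definition, and since $F$, rewritten via (\ref{riem}), has degree $16$ in the $x_\e$, each $\frac{\p F}{\p x_\e}$ is homogeneous of degree $15$ too. The crucial observation is that the ideal of relations among the theta constants of the second order in genus $3$ is generated by a single polynomial of degree $16$, so any polynomial of degree $\le 15$ vanishing identically on $\H_3$ is the zero polynomial; hence $\frac{\p F}{\p x_\e}=-\frac{14}{5}c_\e$ holds as an identity of functions on $\H_3$ if and only if it holds as an identity of polynomials, and the proposition becomes a finite algebraic verification.

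To cut down the work I would use symmetry. Both $X_2$ and $F_2$ are Jacobi forms of weight $8$ transforming in the same way under the full group $\Sp(6,\Z)$: for $X_2$ this is as in the proof of the preceding proposition, and for $F_2$ it holds because $F_2$ spans the one–dimensional, $\Sp(6,\Z)$–invariant system $\Gamma_{00}$ (equivalently, $F_2$ is a scalar multiple of the global section of $\Gamma_{00}$ described in the remark above and in \cite{GSM}). Restricting to a subgroup of $\Sp(6,\Z)$ acting monomially on the $\T[\e](\tau,z)$, the two families $(c_\e)_\e$ and $(\frac{\p F}{\p x_\e})_\e$ are permuted by the same monomial action, which is transitive on the characteristics $\e\in\F^3$. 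It therefore suffices to check $\frac{\p F}{\p x_{\e_0}}=-\frac{14}{5}c_{\e_0}$ for a single characteristic, say $\e_0=000$; and one need not know the constant $-\frac{14}{5}$ beforehand, since it is simply read off from the proportionality of the two degree-$15$ polynomials $\frac{\p F}{\p x_{000}}$ and $c_{000}$.

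The remaining step is to compute these two polynomials, and this is the only real obstacle --- one purely of size and bookkeeping rather than of mathematics. Concretely: for each of the $36$ even $m=[\e,\de]$ one assembles $G_i^{(3)}[m]$, $i=0,1,2,3$, from (\ref{defG}), forms $\Xi_m^{(3)}=G_0^{(3)}[m]-G_1^{(3)}[m]+2G_2^{(3)}[m]-8G_3^{(3)}[m]$, divides by $\t_m^2$, and uses (\ref{riem}) to express the quotient as a degree-$14$ polynomial in the $x_\e$; combining this with $\t_m(\tau,z)^2=\sum_\sigma(-1)^{\de\cdot\sigma}x_\sigma\,\T[\sigma+\e](\tau,z)$ and collecting the coefficient of $\T[000](\tau,z)$ yields $c_{000}$. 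In parallel one rewrites $F$ from (\ref{F}) as a degree-$16$ polynomial in the $x_\e$ via (\ref{riem}) and differentiates in $x_{000}$. One then checks that the two resulting degree-$15$ polynomials are proportional with ratio $-\frac{14}{5}$; this is conveniently done with a computer algebra system, and by the reductions above it proves the identity on all of $\H_3$. Together with Theorem \ref{thm} and the fact that every section of $\Gamma_{00}$ --- in particular $F_2$, hence $X_2$ --- vanishes along $C-C$ for every Jacobian, this yields the vanishing of the two-point function in genus $3$.
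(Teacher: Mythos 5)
Your proposal is correct and follows essentially the same route as the paper: expand both $F_2$ and $X_2$ in the basis $\T[\e](\tau,z)$, use the fact that the only relation among the genus 3 second order theta constants has degree $16$ so that the degree-$15$ coefficient identities must hold as polynomial identities, reduce by modularity (the monomial action permuting the coefficients) to the single coefficient of $\T[000](\tau,z)$, and verify that identity by computer algebra. The paper phrases this as checking that the coefficients of $5F_2+14X_2$ vanish, which is the same computation.
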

\begin{proof}
We have explicit expressions for $F_2$ and $X_2$ as linear combinations of the basis of the sections of $2\Theta$ given by the second order theta functions. Thus what we need to verify is that the coefficient in $5F_2+14X_2$ of any $\T[\e](z)$ is equal to zero. This coefficient is a polynomial of degree 15 in theta constants of the second order and can be verified to be zero using Maple (since the only relation among theta constants of the second order is of degree 16, a polynomial in theta constants of the second order of degree 15 vanishes identically only if it is zero). Notice that by modularity it is again enough to verify that the coefficient of $\T[000](z)$ in $5F_2+14X_2$ is equal to zero.
\end{proof}

\begin{thm}
The 2-point function for the proposed ansatz for the 3-loop superstring measure vanishes identically.
\end{thm}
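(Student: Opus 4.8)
The plan is to obtain the theorem by combining the preceding proposition with Theorem~\ref{thm}, so that essentially no further work is required. First I would observe that the section $F_2$ is exactly the element $f_I$ associated, by the construction that sends an algebraic relation $I$ among the second-order theta constants to the section $f_I\in\Gamma_{00}$, to the polynomial relation $I=F$ of~\eqref{F} in genus $3$; hence $F_2$ vanishes to order at least four at the origin, i.e.\ $F_2\in\Gamma_{00}\subset|2\T_\tau|$, for \emph{every} $\tau\in\H_3$ (this is also how $F_2$ was identified as the generator of $\Gamma_{00}$ for irreducible $\tau$).

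By the preceding proposition, $X_2(\tau,z)=-\tfrac{5}{14}F_2(\tau,z)$ as sections of $2\T_\tau$, for every $\tau\in\H_3$. Therefore $X_2(\tau,z)\in\Gamma_{00}$ for every $\tau\in\H_3$, in particular for the period matrix $\tau$ of the Jacobian of an arbitrary genus-$3$ Riemann surface. By Theorem~\ref{thm} this containment, taken over the period matrices of Jacobians, is precisely the condition equivalent to the vanishing of the $2$-point function of the proposed ansatz; this completes the proof.

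The substance of the whole argument lies in the proportionality $F_2=-\tfrac{14}{5}X_2$ established in the preceding proposition, and not in the short deduction above. That identity amounts to the statement that the coefficient of each $\T[\e](\tau,z)$ in $5F_2+14X_2$ --- an explicit polynomial of degree $15$ in the second-order theta constants, obtained by applying Riemann's addition formula~\eqref{riem} together with Proposition~\ref{expressible} to the definitions of $F_2$ and $X_2$ --- vanishes identically, which follows because the ideal of relations among these constants in genus $3$ is generated in degree $16$. The only real obstacle is the bulk of this symbolic verification; it is reduced to a single coefficient by modularity, since $\Sp(6,\Z)$ permutes the coefficients $c_\e$ through a subgroup acting monomially on the second-order theta constants, so that it suffices to check that the coefficient of $\T[000](\tau,z)$ is the zero polynomial.
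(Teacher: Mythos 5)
Your argument is correct and is essentially identical to the paper's own proof: the theorem follows by combining the proportionality $F_2=-\tfrac{14}{5}X_2$ from the preceding proposition (where all the computational substance lies) with the fact that $F_2=f_F\in\Gamma_{00}$ and the equivalence of Theorem~\ref{thm}. Your explicit remark that $F_2$ lies in $\Gamma_{00}$ for every $\tau$ via the $f_I$ construction applied to the relation $F$ is a minor clarification of what the paper leaves implicit, not a different approach.
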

\begin{proof}
By the above proposition we see that for any $\tau\in\H_3$ the function $X_2$, being a constant multiple of $F_2$, lies in the linear system $\Gamma_{00}\subset |2\T_\tau|$. By theorem \ref{thm} this is equivalent to the identical vanishing of the two-point function.
\end{proof}

\begin{rem}
We  note  that the global section $F_2$ is proportional also to the global section
$f(\tau,\, z)$. Really we have that
$$F_2(\tau, z)=c(\tau) f (\tau,\, z)$$
 for any irreducible $\tau \in \H_g$. Moreover $c(\tau)$ results to be a modular function with respect to $\Sp(6,\Z)$  that is  regular on the set of irreducible point, so it is regular everywhere ( modular form) and hence it is a non zero constant. This identity produces eight non trivial identities expressing  each jacobian determinant  $ g_{\e_1,\ldots,\e_7}(\tau)$ as a polynomial of degree 15 in the theta constants
 $\T[\sigma](\tau,0)$

 \end{rem}

\section{Conclusion}
There are two generalizations that it is natural to try to prove.

\smallskip
First, one could ask whether the vanishing of the 2-point function can be obtained for the proposed in \cite{G} ansatz in higher genera. For genus 4 the ansatz is also given in \cite{CDPvG2} and is manifestly holomorphic in either formulation. The holomorphicity of the ansatz in genus 5 was proven in \cite{SM}, and thus it is natural to ask whether the 2-point function vanishes for $g\le 5$. By theorem \ref{thm} we know that this is equivalent to $X_2$ lying in the linear system $\Gamma_{00}$. However, already for genus 4 the geometry of the situation is much more complicated: instead of just one relation $F$ in genus 3 the ideal of relations among theta constants of the second order in genus 4 is unknown.
 and an explicit basis for $\Gamma_{00}$ is unknown for $g=4$ .

Moreover, it could be that here the fact that we are working on the moduli space of curves $\M_4$ rather than $\A_4$ plays a role --- the geometry of $\Gamma_{00}$ depends on this, see \cite{Iz}.

\smallskip
Second, one could try to prove the vanishing of the 3-point function. As shown in \cite{DHPe} for genus 2, this is equivalent to proving that the sum
$$
 \sum\limits_m \Xi_m^{(g)} S_m(a,b)S_m(b,c)S_m(c,a)
$$
vanishes. Using the explicit formula for the Sz\"ego kernel and canceling the $m$-independent factor, this is equivalent to the function
$$
 X_3(a,b,c):= \sum\limits_m \frac{\Xi_m(\tau)}{\t_m^3(\tau,0)} \t_m(a-b)\t_m(b-c)\t_m(c-a)
$$
vanishing identically for $a,b,c\in C$. However, in this case we do not know a natural function on $Jac(C)^{\times n}$ of which $X_3$ is a restriction, and there is no analog of the theory of the $\Gamma_{00}$ for more points. It seems that the identity among the third order theta functions obtained by Krichever in his proof of the trisecant conjecture (\cite{K}, formula (1.18)) may potentially be useful in reducing the vanishing of the 3-point to the vanishing of the 2-point function, but so far we have not been able to find an explicit way to do this.

\section*{Acknowledgements}
We are grateful to Eric D'Hoker and Duong Phong for introducing us to questions about the superstring scattering amplitudes and explanations regarding the conjectured properties of $N$-point functions. The computations for this paper were done using Maplesoft's Maple$^\copyright$ software.

\end{document}